\documentclass[a4paper,11pt]{article}

\usepackage{amsmath,amsthm,amssymb,amsfonts}
\usepackage{bm}
\usepackage{geometry}
\usepackage[dvipdfmx]{graphicx} 
\usepackage{url}

\newtheorem{thm}{Theorem}[section]
\newtheorem{lem}[thm]{Lemma}
\newtheorem{rem}[thm]{Remark}

\newcommand{\E}{\mathbb{E}}

\newcommand{\Id}{\mathrm{Id}}
\newcommand{\Impart}{\mathrm{Im}}

\newcommand{\dist}{\mathrm{dist}}

\newcommand{\length}{\mathrm{length}}

\title{Strong-disorder expansion of the root-averaged density of states
for the Anderson model on the Bethe lattice}
\author{%
Masahiro Kaminaga\thanks{Corresponding author. Email: kaminaga@mail.tohoku-gakuin.ac.jp}\\
{\small Department of \textit{Information Technology}, Faculty of Engineering,}\\
{\small Tohoku Gakuin University, Sendai, Miyagi 984-8588, Japan}\\
}
\date{}

\begin{document}

\maketitle

\begin{abstract}
We study the root-averaged density of states for the Anderson model on
the Bethe lattice in the strong-disorder regime. 
Here the density of states means the root-averaged spectral measure, 
not a finite-volume eigenvalue counting limit. 
We assume that the single-site distribution
has compact support and has a locally analytic density on an interval
$I^\sharp$ containing a given interval $I$. 
Combining the random-walk expansion on the tree with a complex-analytic argument for the
single-site Stieltjes transforms, 
we prove that the scaled averaged diagonal resolvent has a holomorphic continuation 
to a complex neighborhood of $I$ for all sufficiently large $\lambda$. 
By the Stieltjes inversion formula, the root-averaged density of states
measure is absolutely continuous on the scaled energy window $\lambda I$,
and its density is real analytic and has a finite-order strong-disorder expansion there.
In the scaled form $E=\lambda\xi$, the leading coefficient 
is the local density of the single-site distribution. 
All odd coefficients vanish, and the higher coefficients are finite sums 
determined by occupation profiles of short closed walks on the tree. 
For the uniform single-site distribution, 
we compute the first nonzero correction term explicitly.

\medskip
\noindent\textbf{Keywords.}
Anderson model, Bethe lattice, density of states, strong disorder, random-walk expansion

\medskip
\noindent\textbf{Mathematics Subject Classification (2020).}
82B44; 60H25; 47B80; 81Q10
\end{abstract}

\section{Introduction}

Let $\mathbb T_q$ denote the infinite $(q+1)$-regular tree,
which we call the Bethe lattice.
This terminology is historically related to Bethe's early work on lattice
models with nearest-neighbor interactions~\cite{Bethe1935}.
We fix a reference vertex and denote it by $0$.
Let $A$ be the adjacency operator on $\ell^2(\mathbb T_q)$, defined by
$$
(A\psi)(x)=\sum_{y\sim x}\psi(y),
$$
where $x\sim y$ means that $x$ and $y$ are adjacent vertices. 
We consider the Anderson model
$$
H_{\lambda,\omega}=A+\lambda V_\omega,
$$
where $V_\omega$ is the multiplication operator
$$
(V_\omega\psi)(x)=\omega_x\psi(x).
$$
Here, $\{\omega_x\}_{x\in\mathbb T_q}$ are independent identically
distributed real random variables with common law $\mu$, defined on a
probability space $(\Omega,\mathcal F,\mathbb P)$, and $\lambda>0$ is the disorder strength.
We write $\mathbb E$ for expectation with respect to $\mathbb P$.

This model is an ergodic random Schr\"odinger operator on the Bethe lattice. 
Hence, its spectrum is almost surely deterministic, 
and we shall use the standard almost-sure spectral description
$$
\Sigma_\lambda
=
[-2\sqrt q,\,2\sqrt q]+\lambda\operatorname{supp}\mu ,
$$
see, for example,
\cite{Klein1994,AizenmanWarzel2013}.
In particular, for every interval $I\subset\operatorname{supp}\mu$,
we have $\lambda I\subset\Sigma_\lambda$.
The purpose of this paper is to study the root-averaged density of states 
in such a scaled energy window.

Throughout this paper, the density of states measure means the
root-averaged spectral measure defined for Borel sets $B\subset\mathbb R$ by
$$
N_\lambda(B) = \mathbb E\langle \delta_0,E_{H_{\lambda,\omega}}(B)\delta_0\rangle.
$$
Here, $\delta_0$ is the canonical basis vector at the root, and
$E_{H_{\lambda,\omega}}(\cdot)$ is the projection-valued spectral measure
of $H_{\lambda,\omega}$. 
Thus, $\langle \delta_0,E_{H_{\lambda,\omega}}(\cdot)\delta_0\rangle$ is the
scalar spectral measure seen from the root, and $N_\lambda$ is its disorder average. 
Equivalently, $N_\lambda$ is the measure whose Stieltjes transform 
is the averaged diagonal Green function.
Since the Bethe lattice is nonamenable, we do not define the density of
states by finite-volume eigenvalue counting limits. 
A more detailed comment on this convention is given after the main theorem. 
Standard references for ergodic random Schr\"odinger operators and
density-of-states-type measures are~\cite{CyconFroeseKirschSimon1987,KirschMetzger2007}.

The Anderson model on the Bethe lattice has been studied from several viewpoints. 
One important direction concerns spectral type, in particular
the stability of absolutely continuous spectrum in the weak disorder regime. 
This problem was studied by Klein and later by Aizenman, Sims,
and Warzel~\cite{Klein1994,Klein1998,AizenmanSimsWarzel2006}.
Another direction concerns regularity of density-of-states-type measures.
On the Bethe lattice, Acosta and Klein proved analyticity for Cauchy
and Cauchy-near single-site distributions, while Dolai and Krishna
established high-disorder smoothness of the density of states~\cite{AcostaKlein1992,DolaiKrishna2023}.
For lattice Anderson models, high-disorder regularity of the density of states was studied by 
Bovier, Campanino, Klein, and Perez~\cite{BovierCampaninoKleinPerez1988}. 
Earlier analyticity results using
related expansion and replica methods go back to Constantinescu, Fr\"ohlich, and Spencer~\cite{ConstantinescuFrohlichSpencer1984}, and related regularity questions
for correlation functions at strong disorder were studied by Bellissard
and Hislop \cite{BellissardHislop2007}. 
The local complex-analytic argument used below is closest in spirit to
\cite{KaminagaKrishnaNakamura2012}.

The aim of the present paper is more specific. 
We do not merely prove a qualitative regularity statement. 
Instead, we derive an explicit finite-order strong-disorder expansion of 
the root-averaged density of states in the scaled energy window $E=\lambda\xi$. 
The coefficients are finite combinatorial sums over closed walks on the Bethe lattice, or
equivalently over their occupation profiles. 
To the best of our knowledge, no explicit coefficient-level strong-disorder expansion for
the root-averaged density of states on the Bethe lattice has appeared in the literature.

In the strong disorder region, 
the diagonal random term $\lambda V_\omega$ is the principal part of the operator, 
and the adjacency operator $A$ may be treated as a perturbation. 
This suggests an expansion of the averaged diagonal resolvent in powers of $\lambda^{-1}$.
However, an expansion valid only away from the scaled support of $\mu$ is
not sufficient for the density of states, because Stieltjes inversion
requires boundary values on the real axis. 
Thus, the main analytic problem is to control the scaled averaged diagonal resolvent in a complex
neighborhood of a real interval inside the scaled support.

We solve this problem under a local analyticity assumption on the single-site law. 
Namely, we assume that $\mu$ has compact support and that, 
on an interval $I^\sharp$, it has a density $\rho$ which extends
holomorphically to a complex neighborhood of a smaller interval $I$. 
We then apply a simple complex function argument, similar to the one used in
\cite{KaminagaKrishnaNakamura2012}, to the single-site Stieltjes transforms. 
This gives a holomorphic continuation of these transforms across $I$. 
Since each coefficient in the random-walk expansion is a finite sum of 
products of such single-site transforms, 
the continuation extends to the whole expansion and hence to the scaled averaged diagonal resolvent.

Let us clarify the role of the Bethe lattice. 
The analytic continuation mechanism based on the random-walk expansion is robust. 
What is special here is the structure of the coefficients. 
Since $\mathbb T_q$ is bipartite, every closed walk starting and ending at the root has even
length, and hence all odd coefficients vanish. 
Moreover, because the graph is a tree, the contribution of a closed walk is encoded by the
finite subtree it visits together with the occupation numbers of its vertices. 
In this sense, the Bethe lattice structure leads to explicit
and computable coefficients depending on the branching number $q$.

For $0<\delta<\delta_0$, let
$$
\Omega_\delta(I)
=
\{\zeta\in\mathbb C:\operatorname{dist}(\zeta,I)<\delta\}
$$
be the complex $\delta$-neighborhood of $I$. 
Our main result says that, under the above local analyticity assumption on $\rho$, 
the averaged diagonal resolvent
$$
m_\lambda(z) := \mathbb E\langle \delta_0,(H_{\lambda,\omega}-z)^{-1}\delta_0\rangle,
\qquad \operatorname{Im}z>0,
$$
admits, after the scaling $z=\lambda\zeta$, a holomorphic continuation
in $\zeta$ to $\Omega_\delta(I)$ for all sufficiently large $\lambda$.
More precisely, for each $N\ge 0$ we obtain an expansion of the form
$$
m_\lambda(z) = \sum_{n=0}^{N}\lambda^{-n-1}M_n(z/\lambda) + R_{N,\lambda}(z),
\qquad z/\lambda\in\Omega_\delta(I),
$$
where the coefficient functions $M_n$ are holomorphic on
$\Omega_\delta(I)$ and are given by finite sums over closed walks of
length $n$ starting and ending at the root. 
The remainder satisfies the uniform bound
$$
|R_{N,\lambda}(z)|\le C_{N,\delta}\lambda^{-N-2}.
$$
As a consequence, the root-averaged density of states measure is
absolutely continuous on $\lambda I$, 
and its density $n_\lambda(E)$ is real analytic there. 
In the scaled form $E=\lambda\xi$, $\xi\in I$, we obtain
$$
n_\lambda(\lambda\xi) = \sum_{n=0}^{N}\lambda^{-n-1}a_n(\xi) + r_{N,\lambda}(\xi),
$$
where $a_n(\xi)=\pi^{-1}\operatorname{Im}M_n(\xi)$. 
In particular, $a_0(\xi)=\rho(\xi)$, and $a_n\equiv0$ for every odd $n$.

%%%%%%%%%%%%%%%%%%%%%%%%%%%%%%%%%%%%%%%%%%%%%%%%%%%
\section{Model, notation, and main results}

Let $q\ge2$, let $\mathbb T_q$ be the infinite $(q+1)$-regular tree,
and let $0\in\mathbb T_q$ be a fixed root.
We write $x\sim y$ if $x$ and $y$ are neighbors.
We consider the adjacency operator $A$ on $\ell^2(\mathbb T_q)$,
defined by
$$
(A\psi)(x)=\sum_{y\sim x}\psi(y).
$$
Since every vertex has exactly $q+1$ neighbors, one has
$$
\|A\|=2\sqrt{q}.
$$
In the estimates below, we use only the elementary bound
$$
\|A\|\le q+1.
$$
Let $\{\omega_x\}_{x\in\mathbb T_q}$ be independent identically
distributed real random variables with common law $\mu$, and assume that
$\mu$ has compact support. 
For $\lambda>0$, we define the random Schr\"odinger operator
$$
H_{\lambda,\omega}=A+\lambda V_\omega.
$$
For $z\in\mathbb C_+=\{z\in\mathbb C:\Impart z>0\}$, we write
$$
G_{\lambda,\omega}(0,0;z)
=
\langle \delta_0,(H_{\lambda,\omega}-z)^{-1}\delta_0\rangle
$$
and
$$
m_\lambda(z)=\E G_{\lambda,\omega}(0,0;z).
$$
This averaged diagonal resolvent is the main object of this paper.

For $k=1,2,\dots$, we define the single-site Stieltjes transforms by
\begin{equation}
\label{eq:single-site-transform}
s_k(\zeta)
=
\int_{\mathbb R}\frac{d\mu(t)}{(t-\zeta)^k},
\qquad
\Impart \zeta>0.
\end{equation}
We now state the local analyticity assumption which is used in the rest of the paper. 
Let $I=(a,b)$ be a bounded open interval, and let $\delta_0>0$. 
We put
$$
I^{\sharp}=(a-\delta_0,b+\delta_0)
$$
and
$$
\Omega_\delta(I)=\{z\in\mathbb C:\dist(z,I)<\delta\}
$$
for $0<\delta\le \delta_0$.

We assume that $\mu$ has a density $\rho$ on $I^{\sharp}$ and that
this density, still denoted by $\rho$, extends holomorphically to
$\Omega_{\delta_0}(I)$ and continuously to its closure.
Outside $I^{\sharp}$, the measure $\mu$ may be singular. 
This local assumption is sufficient for our argument.

For later use, we introduce the notation for closed walks. 
For $n\ge0$, let $\Gamma_n(0,0)$ be the set of all closed walks of length $n$
starting and ending at the root $0$, that is,
$$
\Gamma_n(0,0)=\{\gamma=(x_0,\ldots,x_n): x_0=x_n=0,\ 
x_{j-1}\sim x_j,\ j=1,\ldots,n\}.
$$
For $\gamma\in\Gamma_n(0,0)$, we define the occupation number of a vertex $x$ by
$$
\nu_\gamma(x) = \#\{j\in\{0,\ldots,n\}: x_j=x\}.
$$
The occupation profile of $\gamma$ means the finite list of the positive
occupation numbers $\nu_\gamma(x)$, written in nonincreasing order.
Equivalently, it is encoded by $m_k(\gamma)=\#\{x\in \mathbb T_q:\nu_\gamma(x)=k\}$.

The first theorem is the main resolvent statement.

\begin{thm}\label{thm:main-resolvent}
Assume the local analyticity condition above. 
Fix $0<\delta<\delta_0$.
Then, there exist $\lambda_0>0$ and functions $M_n$, $n=0,1,2,\ldots$,
holomorphic on $\Omega_\delta(I)$ such that, 
for every $\lambda\ge\lambda_0$, the function
$$
\zeta\mapsto m_\lambda(\lambda\zeta),
\qquad \Impart\zeta>0,
$$
extends holomorphically from $\mathbb C_+$ to $\Omega_\delta(I)$, and
the series
$$
\sum_{n=0}^{\infty}\lambda^{-n-1}M_n(\zeta)
$$
converges uniformly on $\Omega_\delta(I)$ to this holomorphic continuation.

Moreover, for every $N\ge0$, there exists $C_{N,\delta}>0$ such that
\begin{equation}
\label{eq:main-resolvent-expansion-zeta}
m_\lambda(\lambda\zeta)=\sum_{n=0}^{N}\lambda^{-n-1}M_n(\zeta)
+\widetilde R_{N,\lambda}(\zeta),\qquad\zeta\in\Omega_\delta(I),
\end{equation}
with the uniform bound
$$
|\widetilde R_{N,\lambda}(\zeta)|
\le
C_{N,\delta}\lambda^{-N-2}.
$$
The coefficient functions are given by
\begin{equation}
\label{eq:Mn-formula-statement}
M_n(\zeta)
=
(-1)^n
\sum_{\gamma\in\Gamma_n(0,0)}
\prod_{x:\,\nu_\gamma(x)>0}
s_{\nu_\gamma(x)}(\zeta),
\qquad
\zeta\in\Omega_\delta(I).
\end{equation}
In particular, $M_n\equiv0$ for every odd $n$, and
$$
M_0(\zeta)=s_1(\zeta).
$$
\end{thm}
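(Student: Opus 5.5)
The plan is to rescale, expand the resolvent in the random-walk (Neumann) series around the diagonal part, average term by term, and continue each term holomorphically. Write $z=\lambda\zeta$, so that
$$
H_{\lambda,\omega}-z=\lambda\bigl(V_\omega-\zeta+\lambda^{-1}A\bigr).
$$
Expanding in powers of $\lambda^{-1}A$ gives
$$
(H_{\lambda,\omega}-z)^{-1}=\sum_{n\ge0}(-1)^n\lambda^{-n-1}\bigl[(V_\omega-\zeta)^{-1}A\bigr]^n(V_\omega-\zeta)^{-1}.
$$
The $(0,0)$ matrix element of the $n$-th term is a sum over $\gamma\in\Gamma_n(0,0)$ of $\prod_{j=0}^n(\omega_{x_j}-\zeta)^{-1}=\prod_{x}(\omega_x-\zeta)^{-\nu_\gamma(x)}$. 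Taking expectations and using independence of the $\omega_x$ yields exactly \eqref{eq:Mn-formula-statement}. Since $\mathbb T_q$ is bipartite, $\Gamma_n(0,0)=\emptyset$ for odd $n$, so $M_n\equiv 0$ there; and $\Gamma_0(0,0)$ is the single trivial walk, giving $M_0=s_1$.

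This formal computation is rigorous only where the Neumann series converges uniformly in $\omega$. Since $\|(V_\omega-\zeta)^{-1}\|\le(\Impart\zeta)^{-1}$ and $\|A\|\le q+1$, this holds when $\lambda\Impart\zeta>2(q+1)$, and there Fubini justifies the term-by-term average. Near the real axis the series need not converge for fixed $\omega$, so the continuation has to be carried out on the averaged coefficients rather than pathwise.

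The analytic input is a uniform bound on continued single-site transforms. Fix $\delta<\delta_1<\delta_0$ and split $\mu=\mu|_{\mathbb R\setminus J}+\rho\,dt|_{J}$ with $J=[a-\delta_1,b+\delta_1]$. The first part gives a function holomorphic off $\mathbb R\setminus J$, and hence on $\Omega_\delta(I)$. It is bounded there by $(\delta_1-\delta)^{-k}$, because the support of $\mu|_{\mathbb R\setminus J}$ lies at distance at least $\delta_1-\delta$ from $\Omega_\delta(I)$.

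For the second part, with $\Impart\zeta>0$, I would deform the segment $J$ into a path $\Gamma$ in the closed lower half-plane. The path runs from $a-\delta_1$ down to depth $\delta_1$ below $I$ and back to $b+\delta_1$, staying inside $\Omega_{\delta_0}(I)$, where $\rho$ is holomorphic, and at distance $\ge d:=\delta_1-\delta$ from $\overline{\Omega_\delta(I)}$ except near its endpoints. The endpoints need a little care: I would choose $\Gamma$ to leave the real axis only at points whose distance from $\Omega_\delta(I)$ is at least $d$, which is possible since $a-\delta_1$ and $b+\delta_1$ are at distance $\delta_1-\delta$ from $\Omega_\delta(I)$. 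Cauchy's theorem then shows that $\int_\Gamma\rho(t)(t-\zeta)^{-k}\,dt$ agrees with the original integral on $\mathbb C_+$ and is holomorphic on $\Omega_\delta(I)$, as well as on $\{\Impart\zeta>\delta/2\}$.

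The outcome is a holomorphic extension $\tilde s_k$ on the connected open set $D=\Omega_\delta(I)\cup\{\Impart\zeta>\delta/2\}$ with
$$
|\tilde s_k(\zeta)|\le C_0\,d^{-k},
$$
where $C_0\ge1$ depends only on $\sup_\Gamma|\rho|$, $\length(\Gamma)$ and the total mass, and is independent of $k$. On $\{\Impart\zeta>\delta/2\}$ the trivial bound $(\delta/2)^{-k}$ also applies, and I would absorb it by shrinking $d$.

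Next comes the summation. For $\gamma\in\Gamma_n(0,0)$ we have $\sum_x\nu_\gamma(x)=n+1$ and at most $n+1$ visited vertices, so
$$
\Bigl|\prod_x\tilde s_{\nu_\gamma(x)}\Bigr|\le (C_0/d)^{n+1}.
$$
Since $\#\Gamma_n(0,0)\le(q+1)^n$, this gives $|M_n|\le (C_0/d)\bigl((q+1)C_0/d\bigr)^n$ on $D$. For $\lambda\ge\lambda_0:=2(q+1)C_0/d$, the series $\sum_n\lambda^{-n-1}M_n$ therefore converges uniformly on $D$ to a holomorphic function $F_\lambda$. Its tail beyond $N$ is bounded by $C_{N,\delta}\lambda^{-N-2}$ by a geometric-series estimate; here the odd-index vanishing is not needed.

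Finally, $F_\lambda(\zeta)=m_\lambda(\lambda\zeta)$ on the half-plane $\Impart\zeta>2(q+1)/\lambda$ by the Fubini step. Both functions are holomorphic on the connected set $D\cap\mathbb C_+$, which is a union of a strip over $I$ and a half-plane and so contains that half-plane. The identity theorem then gives equality on $D\cap\mathbb C_+$, so $F_\lambda$ is the desired continuation to $\Omega_\delta(I)$.

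I expect the main obstacle to be the contour-deformation step with uniform constants. One has to choose $\Gamma$ compatibly with the endpoints where the density assumption stops, and verify that the bound on $\tilde s_k$ grows only geometrically in $k$ with constants independent of $\zeta\in D$. Everything else is combinatorial bookkeeping.
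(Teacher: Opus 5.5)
Your proposal is correct and follows essentially the same route as the paper. Both arguments use the Neumann (random-walk) expansion in the region $\Impart(\lambda\zeta)>q+1$, averaged term by term. Both then deform each single-site transform onto a lower-half-plane contour with a $k$-independent constant, bound $|M_n|\le (q+1)^n K^{n+1}$, resum for large $\lambda$, and apply the identity theorem on a connected set.

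Your deviations are harmless refinements. You deform onto the lower boundary of $\Omega_{\delta_1}(I)$ for some $\delta<\delta_1<\delta_0$, so you never need the continuity of $\rho$ up to $\partial\Omega_{\delta_0}(I)$ that the paper's contour $\eta$ relies on. You also enlarge the domain by $\{\Impart\zeta>\delta/2\}$. Finally, your hedge about the endpoints is unnecessary, since every point of that arc is at distance exactly $\delta_1$ from $I$.
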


It is useful to rewrite \eqref{eq:main-resolvent-expansion-zeta} in the
original variable $z$. Replacing $\zeta$ by $z/\lambda$ in
\eqref{eq:main-resolvent-expansion-zeta}, we obtain
\begin{equation}
\label{eq:main-resolvent-expansion}
m_\lambda(z)=\sum_{n=0}^{N}\lambda^{-n-1}M_n(z/\lambda)+R_{N,\lambda}(z),
\qquad z/\lambda\in\Omega_\delta(I),
\end{equation}
where
$$
R_{N,\lambda}(z)=\widetilde R_{N,\lambda}(z/\lambda),
$$
and
\begin{equation}
\label{eq:main-resolvent-remainder}
|R_{N,\lambda}(z)|
\le
C_{N,\delta}\lambda^{-N-2}.
\end{equation}

Before stating the consequence for the density of states, we specify our convention explicitly. 
In this paper, the density of states measure means the root-averaged spectral measure
$$
N_\lambda(B)
=
\E\langle \delta_0,E_{H_{\lambda,\omega}}(B)\delta_0\rangle
$$
for Borel sets $B\subset\mathbb R$. 
This is the measure whose Stieltjes transform is the averaged diagonal resolvent $m_\lambda$. 
We do not use any finite-volume eigenvalue counting definition of the integrated density of states.

The next theorem is the consequence for the density of states.

\begin{thm}
\label{thm:main-dos}
Assume the same hypotheses as in Theorem~\ref{thm:main-resolvent}. Fix
$0<\delta<\delta_0$ and $N\ge0$, and let $\lambda_0$ and
$C_{N,\delta}$ be the constants in Theorem~\ref{thm:main-resolvent}.
Then, for every $\lambda\ge\lambda_0$, the density of states measure of
$H_{\lambda,\omega}$ is absolutely continuous on $\lambda I$, and its
density $n_\lambda(E)$ is real analytic on $\lambda I$. Moreover, for
$\lambda\ge\lambda_0$ and $\xi\in I$,
\begin{equation}
\label{eq:main-dos-expansion}
n_\lambda(\lambda\xi)
=
\sum_{n=0}^{N}\lambda^{-n-1}a_n(\xi)
+
r_{N,\lambda}(\xi),
\end{equation}
where
\begin{equation}
\label{eq:an-def}
a_n(\xi)
=
\frac{1}{\pi}\Impart M_n(\xi),
\qquad
\xi\in I.
\end{equation}
Here $M_n(\xi)$ denotes the value at $\xi\in I$ of the holomorphic
continuation of $M_n$ obtained in Theorem~\ref{thm:main-resolvent}.
The remainder satisfies
\begin{equation}
\label{eq:rn-bound}
|r_{N,\lambda}(\xi)|
\le
\frac{1}{\pi}C_{N,\delta}\lambda^{-N-2}.
\end{equation}
In particular,
\begin{equation}
\label{eq:a0-rho}
a_0(\xi)=\rho(\xi),
\end{equation}
and $a_n\equiv0$ for every odd $n$.
\end{thm}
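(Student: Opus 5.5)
We derive Theorem~\ref{thm:main-dos} from Theorem~\ref{thm:main-resolvent} via Stieltjes inversion. The root spectral measure $\langle\delta_0,E_{H_{\lambda,\omega}}(\cdot)\delta_0\rangle$ is a probability measure with Stieltjes transform $G_{\lambda,\omega}(0,0;z)$. Hence, by Fubini, $N_\lambda$ is a probability measure with Stieltjes transform $m_\lambda$. The Stieltjes inversion formula then gives, for every interval $[E_1,E_2]$,
$$
\tfrac12 N_\lambda(\{E_1\})+N_\lambda((E_1,E_2))+\tfrac12 N_\lambda(\{E_2\})
=
\lim_{\varepsilon\downarrow0}\frac1\pi\int_{E_1}^{E_2}\Impart m_\lambda(E+i\varepsilon)\,dE .
$$

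Fix $\lambda\ge\lambda_0$. By Theorem~\ref{thm:main-resolvent}, $F_\lambda(\zeta):=m_\lambda(\lambda\zeta)$ extends holomorphically to $\Omega_\delta(I)$. This domain contains a neighborhood of the compact set $\{\xi+i\eta:\xi\in[\xi_1,\xi_2],\ 0\le\eta\le\delta/2\}$ for any $[\xi_1,\xi_2]\subset I$. Therefore $m_\lambda(E+i\varepsilon)=F_\lambda((E+i\varepsilon)/\lambda)\to F_\lambda(E/\lambda)$ uniformly for $E\in[\lambda\xi_1,\lambda\xi_2]$, and the limit is real analytic in $E$.

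The uniform convergence has two consequences. First, the weight of any single point $E_0$ is $\lim_{\varepsilon\downarrow0}\varepsilon\,\Impart m_\lambda(E_0+i\varepsilon)=0$, because the imaginary part stays bounded; so there are no atoms in $\lambda I$. Second, we may pass the limit under the integral. This yields $N_\lambda((E_1,E_2))=\int_{E_1}^{E_2} n_\lambda(E)\,dE$ with
$$
n_\lambda(E)=\frac1\pi\Impart F_\lambda(E/\lambda).
$$
So $N_\lambda$ restricted to $\lambda I$ is absolutely continuous with this density. The density is real analytic because it is the imaginary part of a holomorphic function restricted to the real axis.

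For the expansion, set $E=\lambda\xi$ with $\xi\in I\subset\Omega_\delta(I)$ in \eqref{eq:main-resolvent-expansion-zeta} and take imaginary parts. Since $\lambda^{-n-1}$ is real, this gives \eqref{eq:main-dos-expansion} with $a_n$ as in \eqref{eq:an-def} and $r_{N,\lambda}(\xi)=\pi^{-1}\Impart\widetilde R_{N,\lambda}(\xi)$. The bound \eqref{eq:rn-bound} follows from $|\Impart w|\le|w|$. Odd $a_n$ vanish because $M_n\equiv0$ for odd $n$.

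The remaining step, and the only one needing real work, is \eqref{eq:a0-rho}: we must show $\pi^{-1}\Impart s_1(\xi)=\rho(\xi)$ for the continued $s_1$. I would argue directly on the continuation rather than relying on its construction. For $\xi\in I$, the boundary value of $s_1$ from $\mathbb C_+$ satisfies, by the standard Sokhotski--Plemelj / Stieltjes inversion argument, $\lim_{\varepsilon\downarrow0}\pi^{-1}\Impart s_1(\xi+i\varepsilon)=\rho(\xi)$. Since $s_1$ extends holomorphically to $\Omega_\delta(I)$, and hence continuously across $I$ from above, this boundary value equals $\pi^{-1}\Impart s_1(\xi)$.

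In detail, split $\mu=\rho\,\mathbf 1_{I^\sharp}dt+\mu'$, where $\mu'$ is supported away from $I$. Then $\int d\mu'(t)/(t-\zeta)$ is holomorphic near $I$ and real on $I$, so it contributes nothing to the imaginary part. For the local piece $\int_{I^\sharp}\rho(t)\,dt/(t-\xi-i\varepsilon)$, the imaginary part is a Poisson integral of $\rho$, which tends to $\pi\rho(\xi)$ because $\rho$ is continuous at $\xi$. One should note that the continued $M_0=s_1$ is not real on $I$; this is consistent, since its restriction to $I$ is the boundary value from above and not the principal value.
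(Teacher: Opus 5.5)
Your proposal is correct and follows the paper's proof. You continue $m_\lambda$ via Theorem~\ref{thm:main-resolvent}, apply Stieltjes inversion using the locally uniform convergence of $\Impart m_\lambda(E+i\varepsilon)$ up to $\lambda I$, take imaginary parts of the scaled expansion, and identify $a_0$ through the upper boundary value of $s_1$. The only differences are minor: you use the interval form of the inversion formula with an explicit no-atoms check where the paper uses test functions $\varphi\in C_0^\infty(\lambda I)$, and you write out the Poisson-kernel argument for $\pi^{-1}\Impart s_1(\xi)=\rho(\xi)$ that the paper only cites.
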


\begin{rem}
Let us clarify the meaning of the density of states measure used in this paper.
On amenable graphs such as $\mathbb Z^d$, the integrated density
of states is usually obtained as a normalized eigenvalue counting limit
over finite boxes, because the boundary-volume ratio tends to zero.  
The Bethe lattice is different.
Let $d$ denote the graph distance on $\mathbb T_q$.  
For $r\ge0$, let
$$
S_r=\{x\in\mathbb T_q:d(x,0)=r\}
$$
be the sphere of radius $r$ around the root.  
Since the root has $q+1$ neighbors and, after the first step, 
each vertex has one edge back to its parent and $q$ forward edges, one has
$$
|S_R|=(q+1)q^{R-1}\qquad (R\ge 1).
$$
If $\Lambda_R$ is the ball of radius $R$ around the root, then
$\Lambda_R$ is the disjoint union of $S_0,S_1,\ldots,S_R$, and hence
$$
|\Lambda_R|=1+\sum_{r=1}^R (q+1)q^{r-1}=1+\frac{q+1}{q-1}(q^R-1).
$$
Therefore, we have
$$
\frac{|S_R|}{|\Lambda_R|}\to\frac{q-1}{q}
$$
as $R\to\infty$.
Thus, the boundary has a volume comparable with the volume of the ball.
Boundary effects do not disappear in normalized
finite-volume eigenvalue counting measures, and such limits, if
considered, may depend on the exhaustion and on the boundary condition.

The measure $N_\lambda$ used here is instead a local object.  
It is the disorder average of the spectral measure seen from the root.
Since the Bethe lattice is transitive and the random potential is identically
distributed at all vertices, the root represents a typical vertex in this local sense.
Equivalently, $N_\lambda$ is the measure whose Stieltjes
transform is the averaged diagonal Green function $m_\lambda(z)$.
Thus, the density $n_\lambda(E)$, when it exists, describes the averaged local
density of states at a typical vertex, rather than a normalized
eigenvalue counting measure over finite balls.
\end{rem}

\begin{rem}
The interval $\lambda I$ moves linearly with the disorder strength.
This is natural in the strong-disorder region, because the diagonal part
of $H_{\lambda,\omega}$ is of size $\lambda$, while the hopping term is of size $1$. 
Since the odd coefficients vanish, if the expansion is
written up to the coefficient $a_{2m}$, then one may apply the theorem with $N=2m+1$.
The corresponding remainder is then of order $\lambda^{-2m-3}$.
\end{rem}

\section{Random walk expansion and coefficient functions}\label{sec:randomwalkexpansion}

We first recall the random-walk expansion in the region where the
Neumann series converges in operator norm. 
Put
$$
D_{\lambda,\omega}(z)=\lambda V_\omega-z.
$$
Since $\omega_x\in\mathbb R$ and $\Impart z>0$, the operator
$D_{\lambda,\omega}(z)$ is invertible, 
and its inverse is the diagonal multiplication operator
$$
(D_{\lambda,\omega}(z)^{-1}\psi)(x)=\frac{1}{\lambda\omega_x-z}\psi(x).
$$
In particular,
$$
\|D_{\lambda,\omega}(z)^{-1}\|=\sup_{x\in\mathbb T_q}\frac{1}{|\lambda\omega_x-z|}
\le\frac{1}{\Impart z}.
$$
Therefore, we have
$$
H_{\lambda,\omega}-z=
\bigl(\Id+A D_{\lambda,\omega}(z)^{-1}\bigr)D_{\lambda,\omega}(z).
$$
If $\Impart z>q+1$, then
$$
\|A D_{\lambda,\omega}(z)^{-1}\|
\le
\|A\|\|D_{\lambda,\omega}(z)^{-1}\|
\le
\frac{q+1}{\Impart z}<1.
$$
Hence,
\begin{equation}
\label{eq:resolvent-neumann}
(H_{\lambda,\omega}-z)^{-1}=\sum_{n=0}^{\infty}(-1)^n
D_{\lambda,\omega}(z)^{-1} \bigl(A D_{\lambda,\omega}(z)^{-1}\bigr)^n.
\end{equation}

We next write the diagonal matrix element of the $n$th term explicitly.
For the canonical basis vectors, one has
$$
\langle \delta_x,A\delta_y\rangle =
\begin{cases}
1, & x\sim y,\\
0, & \text{otherwise},
\end{cases}
$$
and
$$
\langle \delta_x,D_{\lambda,\omega}(z)^{-1}\delta_y\rangle
=
\frac{\delta_{xy}}{\lambda\omega_x-z}.
$$
Hence, for $n\ge1$, after inserting the resolution of identity with
respect to the canonical basis $\{\delta_x\}_{x\in\mathbb T_q}$ between
successive factors, we obtain
$$
\langle \delta_0,
D_{\lambda,\omega}(z)^{-1}
\bigl(A D_{\lambda,\omega}(z)^{-1}\bigr)^n
\delta_0\rangle
=
\sum_{x_1,\dots,x_{n-1}\in\mathbb T_q}
\prod_{j=1}^{n}\langle \delta_{x_{j-1}},A\delta_{x_j}\rangle
\prod_{j=0}^{n}\frac{1}{\lambda\omega_{x_j}-z},
$$
where $x_0=x_n=0$.
The product of the matrix elements of $A$ is equal to $1$ exactly when
$$
x_{j-1}\sim x_j,
\qquad
j=1,\dots,n,
$$
and is equal to $0$ otherwise.
Therefore, the nonzero terms are in one to one correspondence with closed walks
$$
\gamma=(x_0,x_1,\dots,x_n)
$$
of length $n$ such that $x_0=x_n=0$ and $x_{j-1}\sim x_j$ for $j=1,\dots,n$. 
These are precisely the walks in $\Gamma_n(0,0)$.
The case $n=0$ is simply
$$
\langle \delta_0,D_{\lambda,\omega}(z)^{-1}\delta_0\rangle
=
\frac{1}{\lambda\omega_0-z},
$$
and, for all $n\ge0$, we may write
$$
\langle \delta_0,
D_{\lambda,\omega}(z)^{-1}
\bigl(A D_{\lambda,\omega}(z)^{-1}\bigr)^n
\delta_0\rangle
=
\sum_{\gamma\in\Gamma_n(0,0)}
\prod_{j=0}^{n}\frac{1}{\lambda\omega_{x_j}-z}.
$$
Substituting this into \eqref{eq:resolvent-neumann}, we obtain
\begin{equation}
\label{eq:path-expansion-z}
G_{\lambda,\omega}(0,0;z)
=
\sum_{n=0}^{\infty}(-1)^n
\sum_{\gamma\in\Gamma_n(0,0)}
\prod_{j=0}^{n}\frac{1}{\lambda\omega_{x_j}-z}.
\end{equation}
The series is absolutely convergent. 
Indeed,
$$
\left|
\prod_{j=0}^{n}\frac{1}{\lambda\omega_{x_j}-z}
\right|
\le
(\Impart z)^{-n-1}
$$
for every closed walk $\gamma$, and
$$
|\Gamma_n(0,0)|\le (q+1)^n.
$$
Hence
$$
\sum_{n=0}^{\infty}
\sum_{\gamma\in\Gamma_n(0,0)}
\left|
\prod_{j=0}^{n}\frac{1}{\lambda\omega_{x_j}-z}
\right|
\le
\sum_{n=0}^{\infty}
\frac{(q+1)^n}{(\Impart z)^{n+1}},
$$
and the right hand side converges when $\Impart z>q+1$. 
This bound is independent of $\omega$. 
Therefore, expectation may be taken term by term.

Recall that $\nu_\gamma(x)$ is the number of visits of $\gamma$ to $x$.
Then,
$$
\sum_{x\in\mathbb T_q}\nu_\gamma(x)=n+1.
$$
Since the random variables $\{\omega_x\}$ are independent, for each
$\gamma\in\Gamma_n(0,0)$ we have
\begin{equation}
\label{eq:averaged-walk-weight}
\E\prod_{j=0}^{n}\frac{1}{\lambda\omega_{x_j}-z}
=
\prod_{x:\,\nu_\gamma(x)>0}
\int_{\mathbb R}
\frac{d\mu(t)}{(\lambda t-z)^{\nu_\gamma(x)}}.
\end{equation}
After the scaling $z=\lambda\zeta$, one has
$$
\frac{1}{\lambda t-\lambda\zeta}
=
\lambda^{-1}\frac{1}{t-\zeta}.
$$
Hence, for $\Impart \zeta>(q+1)/\lambda$,
\begin{equation}
\label{eq:scaled-series}
m_\lambda(\lambda\zeta)
=
\sum_{n=0}^{\infty}\lambda^{-n-1}M_n(\zeta),
\end{equation}
where
\begin{equation}
\label{eq:Mn-def}
M_n(\zeta)
=
(-1)^n
\sum_{\gamma\in\Gamma_n(0,0)}
\prod_{x:\,\nu_\gamma(x)>0}
s_{\nu_\gamma(x)}(\zeta),
\qquad
\Impart \zeta>0.
\end{equation}
Since $\Gamma_n(0,0)$ is finite for each fixed $n$, $M_n$ is
holomorphic on $\mathbb C_+$.

The tree structure gives an immediate consequence.

\begin{lem}
\label{lem:odd-vanish}
For every odd integer $n$, one has $\Gamma_n(0,0)=\emptyset$. 
Hence, $M_n\equiv0$ for every odd $n$.
\end{lem}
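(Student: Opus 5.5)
The plan is to use the bipartite structure of $\mathbb T_q$ through a parity function on vertices. Let $d(x)=d(x,0)$ be the graph distance from the root. Since $\mathbb T_q$ is a tree, the two endpoints of any edge $x\sim y$ lie at distances from the root that differ by exactly one: one endpoint is the parent of the other on the unique geodesic to $0$. Hence $d(y)=d(x)\pm1$, and in particular
$$
(-1)^{d(y)}=-(-1)^{d(x)}\qquad\text{whenever } x\sim y .
$$

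Now take any walk $\gamma=(x_0,\ldots,x_n)$ with $x_{j-1}\sim x_j$ for all $j$, and apply this identity at each step. This gives $(-1)^{d(x_n)}=(-1)^n(-1)^{d(x_0)}$. If $\gamma\in\Gamma_n(0,0)$, then $x_0=x_n=0$, so both sides involve $(-1)^{d(0)}=1$. The identity reduces to $(-1)^n=1$, so $n$ is even. Therefore $\Gamma_n(0,0)=\emptyset$ for every odd $n$.

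The second claim then follows from the definition \eqref{eq:Mn-def}: $M_n$ is $(-1)^n$ times a sum indexed by $\Gamma_n(0,0)$, and an empty sum is zero. So $M_n\equiv0$ on $\mathbb C_+$ for odd $n$, and trivially also on any domain to which it is later continued.

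There is no real obstacle here. The only point to state carefully is why adjacent vertices have distances from the root differing by exactly one; this is where the tree property (unique geodesics, no odd cycles) is used. Equivalently, one can colour vertices by the parity of $d(x)$ and observe that $\mathbb T_q$ is bipartite with respect to this colouring.
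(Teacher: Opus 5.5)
Your proof is correct and follows the same route as the paper: the parity of the graph distance to the root flips along every edge, so a closed walk at $0$ must have even length, and $M_n$ is then an empty sum. Your explicit justification that adjacent vertices of a tree have root-distances differing by exactly one, and your remark that the continued $M_n$ also vanishes, spell out what the paper states in one line.
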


\begin{proof}
Let $|x|$ denote the graph distance from $0$ to $x$. 
Along any edge move, the parity of $|x|$ changes. 
Therefore, a walk of odd length starting at $0$ ends at a vertex whose distance from $0$ is odd, 
and it cannot return to $0$. 
This proves $\Gamma_n(0,0)=\emptyset$ for odd $n$. 
The formula \eqref{eq:Mn-def} then gives $M_n\equiv0$.
\end{proof}

The first coefficients are easily computed.

\begin{lem}
\label{lem:low-coefficients}
One has
\begin{equation}
\label{eq:M0-formula}
M_0(\zeta)=s_1(\zeta),
\end{equation}
\begin{equation}
\label{eq:M2-formula}
M_2(\zeta)=(q+1)s_2(\zeta)s_1(\zeta),
\end{equation}
and
\begin{eqnarray}
\label{eq:M4-formula}
M_4(\zeta)
&=&
(q+1)s_3(\zeta)s_2(\zeta)
+
(q+1)q\,s_3(\zeta)s_1(\zeta)^2
\nonumber\\
&&
+
(q+1)q\,s_2(\zeta)^2s_1(\zeta).
\end{eqnarray}
\end{lem}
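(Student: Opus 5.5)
The plan is to evaluate the defining formula \eqref{eq:Mn-def} directly for $n=0,2,4$. For these lengths the sign $(-1)^n$ equals $+1$, so each $M_n$ is the sum over $\gamma\in\Gamma_n(0,0)$ of $\prod_x s_{\nu_\gamma(x)}$. The whole task is therefore to list the closed walks of each length and record their occupation profiles. A walk of length $n$ visits $n+1$ positions, so $\sum_x\nu_\gamma(x)=n+1$ gives a consistency check on every profile.

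For $n=0$ the only walk is $\gamma=(0)$, with $\nu_\gamma(0)=1$, which gives $M_0=s_1$. For $n=2$ every walk has the form $(0,y,0)$ with $y\sim0$, so there are $q+1$ of them. Each visits the root twice and $y$ once, giving profile $(2,1)$, and hence $M_2=(q+1)s_2s_1$.

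For $n=4$, write $\gamma=(0,x_1,x_2,x_3,0)$ with $x_1,x_3\sim 0$. I would split according to whether the middle vertex $x_2$ equals $0$.
\begin{itemize}
\item If $x_2=0$ and $x_1=x_3=y$, the walk is $(0,y,0,y,0)$. There are $q+1$ such walks, with profile $(3,2)$, contributing $s_3s_2$ each.
\item If $x_2=0$ and $x_1\neq x_3$, there are $(q+1)q$ ordered pairs. The profile is $(3,1,1)$, contributing $s_3s_1^2$ each.
\item If $x_2\neq0$, then $x_2$ is at distance $2$ from the root, since it is adjacent to $x_1$. Because $\mathbb T_q$ is a tree, the only path back to the root passes through the unique parent of $x_2$, so $x_3=x_1=y$. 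The walk is $(0,y,w,y,0)$ with $w\sim y$ and $w\neq0$. This gives $(q+1)q$ walks with profile $(2,2,1)$, contributing $s_2^2s_1$ each.
\end{itemize}
Summing the three cases yields \eqref{eq:M4-formula}.

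The only real point is the case analysis for $n=4$. The key input is that absence of cycles forces $x_3=x_1$ whenever $x_2\neq0$. As a cross-check, the total count $(q+1)+2(q+1)q=(q+1)(2q+1)$ should equal $\langle\delta_0,A^4\delta_0\rangle$ on $\mathbb T_q$.
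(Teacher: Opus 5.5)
Your proposal is correct and follows essentially the same route as the paper: enumerate the closed walks of lengths $0$, $2$, $4$ and read off their occupation profiles $(1)$, $(2,1)$, and $(3,2)$, $(3,1,1)$, $(2,2,1)$ with counts $q+1$, $(q+1)q$, $(q+1)q$. Your case split on whether $x_2=0$, using the tree property to force $x_3=x_1$, makes explicit the exhaustiveness of the three types that the paper only asserts.
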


\begin{proof}
The formula for $M_0$ follows from the trivial walk of length $0$.
There is no walk of length $1$ by Lemma~\ref{lem:odd-vanish}. 
A walk of length $2$ has the form $0\to x\to0$ with $x\sim0$, and there are
exactly $q+1$ such walks.
The root is visited twice and the vertex $x$ is visited once, so \eqref{eq:M2-formula} follows.

For walks of length $4$, there are three types. 
The first type is
$0\to x\to0\to x\to0$, the second type is
$0\to x\to0\to y\to0$ with $x\ne y$, and the third type is
$0\to x\to u\to x\to0$ with $u\sim x$ and $u\ne0$.
The numbers of such walks are $q+1$, $(q+1)q$, and $(q+1)q$, respectively. 
Their occupation number profiles are $(3,2)$, $(3,1,1)$, and $(2,2,1)$. 
Substituting these profiles into \eqref{eq:Mn-def} gives \eqref{eq:M4-formula}.
\end{proof}

The general coefficient has the same form.

\begin{lem}
\label{lem:coefficient-structure}
For every even integer $n\ge0$, the function $M_n(\zeta)$ is a finite
sum of monomials in $s_1(\zeta),\dots,s_{n+1}(\zeta)$, and the
coefficient of each monomial is a nonnegative integer determined only by
the number of closed walks with the corresponding occupation profile.
\end{lem}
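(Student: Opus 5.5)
The proof rests entirely on the explicit formula \eqref{eq:Mn-def}, and the idea is to group its walks by occupation profile. For each $\gamma\in\Gamma_n(0,0)$, the summand $\prod_{x:\,\nu_\gamma(x)>0}s_{\nu_\gamma(x)}(\zeta)$ depends on $\gamma$ only through the multiset of its positive occupation numbers. Equivalently, it depends only on the numbers $m_k(\gamma)=\#\{x:\nu_\gamma(x)=k\}$, since it can be rewritten as $\prod_{k\ge1}s_k(\zeta)^{m_k(\gamma)}$. So the first step is to partition the finite set $\Gamma_n(0,0)$ into classes $\Gamma_n^{\pi}$, one for each occupation profile $\pi=(m_1,m_2,\dots)$ that actually occurs.

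Regrouping the finite sum then gives
\[
M_n(\zeta)=(-1)^n\sum_{\pi}\#\Gamma_n^{\pi}\,\prod_{k\ge1}s_k(\zeta)^{m_k}.
\]
For even $n$ the sign $(-1)^n$ equals $1$. Each coefficient $\#\Gamma_n^{\pi}$ is therefore a nonnegative integer, namely the number of closed walks with profile $\pi$. Distinct profiles give distinct monomials, because the exponent vector $(m_k)$ determines the profile. Hence the coefficient of each monomial is exactly that walk count.

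It remains to bound the indices of the $s_k$ that can appear. Since $\sum_x\nu_\gamma(x)=n+1$ (noted before \eqref{eq:averaged-walk-weight}), every occupation number satisfies $1\le\nu_\gamma(x)\le n+1$. So only $s_1,\dots,s_{n+1}$ occur. Finiteness of the sum follows from $|\Gamma_n(0,0)|\le(q+1)^n$.

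No step is a genuine obstacle; this is purely bookkeeping. The only point needing care is that the correspondence between profiles and monomials is injective, which holds because the exponent vector records the profile.
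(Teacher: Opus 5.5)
Your proposal is correct and follows the paper's proof. Both arguments rewrite each walk's contribution as $\prod_k s_k(\zeta)^{m_k(\gamma)}$, use $\sum_x \nu_\gamma(x)=n+1$ to restrict the indices to $1,\dots,n+1$, and group the finitely many walks in $\Gamma_n(0,0)$ by the vector $(m_1(\gamma),\dots,m_{n+1}(\gamma))$. You also spell out two points the paper leaves implicit: that $(-1)^n=1$ for even $n$, which is what makes the coefficients nonnegative, and that distinct exponent vectors give distinct monomials.
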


\begin{proof}
Fix an even integer $n\ge0$. 
For a closed walk $\gamma\in\Gamma_n(0,0)$,
set
$$
m_k(\gamma)
=
\#\{x\in\mathbb T_q:\nu_\gamma(x)=k\},
\qquad
k=1,2,\dots,n+1.
$$
Then
$$
\sum_{k=1}^{n+1}k\,m_k(\gamma)=n+1.
$$
The contribution of $\gamma$ to $M_n(\zeta)$ is
$$
\prod_{k=1}^{n+1}s_k(\zeta)^{m_k(\gamma)}.
$$
Since $\Gamma_n(0,0)$ is finite, grouping the walks with the same vector
$(m_1(\gamma),\dots,m_{n+1}(\gamma))$ proves the assertion.
\end{proof}

\section{Holomorphic continuation of the single-site transforms}

The key point is that the local analyticity of $\rho$ enables us to
continue each $s_k$ across the interval $I$. 
This is the same kind of argument as the one used in \cite{KaminagaKrishnaNakamura2012}.

Fix $0<\delta<\delta_0$. 
Recall that
$$
I^\sharp=(a-\delta_0,b+\delta_0).
$$
Let $\eta$ be the lower part of the boundary of $\Omega_{\delta_0}(I)$
which connects $a-\delta_0$ to $b+\delta_0$, and orient $\eta$ from
$a-\delta_0$ to $b+\delta_0$. 
Thus, we have
$$
\eta \subset \{z\in\mathbb C:\operatorname{dist}(z,I)=\delta_0,\ \Im z\le 0\}.
$$
We also orient the interval $I^\sharp=(a-\delta_0,b+\delta_0)$ on the
real axis from $a-\delta_0$ to $b+\delta_0$. 
With these conventions, the closed contour $I^\sharp\cup(-\eta)$ is the negatively oriented boundary
of the lower part of $\Omega_{\delta_0}(I)$; see Fig.~\ref{fig:contour-single-site}.
Hence Cauchy's theorem will give the same sign as in \eqref{eq:sk-continuation} below.

\begin{figure}[htbp]
\centering
\includegraphics[width=.60\textwidth]{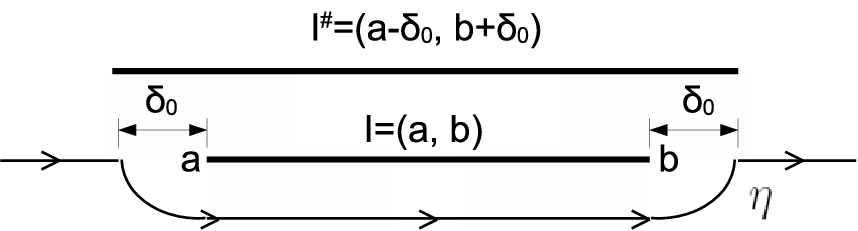}
\caption{Schematic illustration of $I$, $I^\sharp$, and the lower
boundary arc $\eta$. The contour used in the proof is
$I^\sharp\cup(-\eta)$.}
\label{fig:contour-single-site}
\end{figure}

\begin{lem}
\label{lem:single-site-continuation}
For each $k=1,2,\dots$, the function $s_k$ defined on $\mathbb C_+$ by
\eqref{eq:single-site-transform} extends holomorphically to
$\Omega_\delta(I)$. More precisely, for $\zeta\in\Omega_\delta(I)$,
\begin{equation}
\label{eq:sk-continuation}
s_k(\zeta)
=
\int_{\mathbb R\setminus I^{\sharp}}
\frac{d\mu(t)}{(t-\zeta)^k}
+
\int_{\eta}\frac{\rho(w)\,dw}{(w-\zeta)^k}.
\end{equation}
Moreover, there exists a constant $C_\delta>0$ such that
\begin{equation}
\label{eq:sk-bound}
|s_k(\zeta)|
\le
C_\delta(\delta_0-\delta)^{-k},
\qquad
\zeta\in\Omega_\delta(I),
\quad
k=1,2,\dots.
\end{equation}
\end{lem}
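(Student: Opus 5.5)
Split $\mu$ into the part on $\mathbb R\setminus I^\sharp$ and the absolutely continuous part $\rho(t)\,dt$ on $I^\sharp$, so that for $\Impart\zeta>0$
$$
s_k(\zeta)=\int_{\mathbb R\setminus I^\sharp}\frac{d\mu(t)}{(t-\zeta)^k}+\int_{I^\sharp}\frac{\rho(t)\,dt}{(t-\zeta)^k}.
$$
The first term is holomorphic for $\zeta\notin\mathbb R\setminus I^\sharp$. For $\zeta\in\Omega_\delta(I)$ and $t\notin I^\sharp$ we have $|t-\zeta|\ge \delta_0-\delta$, since every point of $\Omega_\delta(I)$ lies within $\delta$ of $I$ and $t$ lies at distance at least $\delta_0$ from $I$. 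Hence this term is holomorphic on $\Omega_\delta(I)$. It is bounded by $\mu(\mathbb R)(\delta_0-\delta)^{-k}\le(\delta_0-\delta)^{-k}$; holomorphy follows by differentiating under the integral, or by Morera's theorem, using this uniform bound.

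For the second term, fix $\zeta\in\mathbb C_+$ and apply Cauchy's theorem to $w\mapsto\rho(w)(w-\zeta)^{-k}$ on the lower half $\Omega_{\delta_0}(I)\cap\{\Impart w<0\}$. Its boundary is $I^\sharp\cup(-\eta)$, as described before the lemma. The pole $\zeta$ lies in the upper half plane, so the integrand is holomorphic inside and continuous up to the boundary by the assumption on $\rho$. The standard version of Cauchy's theorem for functions continuous on the closure, obtained by approximating with slightly shrunk contours, gives
$$
\int_{I^\sharp}\frac{\rho(t)\,dt}{(t-\zeta)^k}=\int_\eta\frac{\rho(w)\,dw}{(w-\zeta)^k}.
$$
This proves \eqref{eq:sk-continuation} on $\mathbb C_+\cap\Omega_\delta(I)$.

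The right-hand side of \eqref{eq:sk-continuation} is holomorphic in $\zeta$ off $\eta\cup(\mathbb R\setminus I^\sharp)$. Each $w\in\eta$ satisfies $\dist(w,I)=\delta_0$, so $|w-\zeta|\ge\delta_0-\delta$ for $\zeta\in\Omega_\delta(I)$. Hence the right-hand side is holomorphic on $\Omega_\delta(I)$. The set $\Omega_\delta(I)$ is connected and meets $\mathbb C_+$ in a nonempty open set, and the right-hand side agrees with $s_k$ there, so it is the (unique) holomorphic continuation.

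For the bound \eqref{eq:sk-bound}, on $\eta$ we get
$$
\Bigl|\int_\eta\frac{\rho(w)\,dw}{(w-\zeta)^k}\Bigr|\le \sup_{\overline{\Omega_{\delta_0}(I)}}|\rho|\;\length(\eta)\,(\delta_0-\delta)^{-k}.
$$
Here $\sup|\rho|<\infty$ by continuity on the compact closure, and $\length(\eta)=b-a+\pi\delta_0$. Combining with the first term, we may take $C_\delta=1+\length(\eta)\sup|\rho|$, which is uniform in $k$.

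The main point needing care is the application of Cauchy's theorem with only boundary continuity of $\rho$. It also requires the correct orientation, so that the real segment and $\eta$ produce equal integrals rather than a sign or a residue term. No residue arises because $\zeta$ lies outside the lower region.
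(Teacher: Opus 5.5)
Your proposal is correct and follows essentially the same route as the paper. You split off $\mathbb R\setminus I^\sharp$, use Cauchy's theorem on the lower half of $\Omega_{\delta_0}(I)$ to trade the integral over $I^\sharp$ for the integral over $\eta$, and use $|t-\zeta|,|w-\zeta|\ge\delta_0-\delta$ to get both holomorphy and the $k$-independent constant $C_\delta=1+\length(\eta)\sup|\rho|$; your extra details (Cauchy's theorem with only boundary continuity, uniqueness via the identity theorem, and $\length(\eta)=b-a+\pi\delta_0$) simply make explicit what the paper leaves implicit.
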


\begin{proof}
For $\Impart \zeta>0$, we divide the integral defining $s_k(\zeta)$
into the part on $I^{\sharp}$ and the part on
$\mathbb R\setminus I^{\sharp}$. 
Since $\rho$ is holomorphic on $\Omega_{\delta_0}(I)$ and the contour
$I^\sharp\cup(-\eta)$ is negatively oriented, Cauchy's theorem gives
$$
\int_{I^\sharp}\frac{\rho(t)\,dt}{(t-\zeta)^k}
-\int_{\eta}\frac{\rho(w)\,dw}{(w-\zeta)^k}=0.
$$
Thus, we have
$$
\int_{I^\sharp}\frac{\rho(t)\,dt}{(t-\zeta)^k}
=
\int_{\eta}\frac{\rho(w)\,dw}{(w-\zeta)^k}.
$$
Hence, for $\Impart \zeta>0$,
$$
s_k(\zeta)
=
\int_{\mathbb R\setminus I^{\sharp}}
\frac{d\mu(t)}{(t-\zeta)^k}
+
\int_{\eta}\frac{\rho(w)\,dw}{(w-\zeta)^k}.
$$
The right hand side is holomorphic for $\zeta\in\Omega_\delta(I)$.
Indeed, the set $(\mathbb R\setminus I^\sharp)\cup\eta$ is contained in
the complement of $\Omega_{\delta_0}(I)$, and hence every
$\zeta\in\Omega_\delta(I)$ has distance at least $\delta_0-\delta$ from this set. 
Therefore, the two integrals define holomorphic functions of
$\zeta$ on $\Omega_\delta(I)$. 
Thus, the right hand side gives the desired continuation, and \eqref{eq:sk-continuation} follows.

We next prove the bound.
By the definition of $I^\sharp$ and $\eta$,
both $\mathbb R\setminus I^\sharp$ and $\eta$ are contained in the
complement of $\Omega_{\delta_0}(I)$.
Hence, if $\zeta\in\Omega_\delta(I)$, then
$$
|t-\zeta|\ge \delta_0-\delta,
\qquad
t\in\mathbb R\setminus I^\sharp,
$$
and
$$
|w-\zeta|\ge \delta_0-\delta,
\qquad
w\in\eta.
$$
Since $\mu$ is a probability measure, we obtain
\begin{eqnarray}
|s_k(\zeta)|
&\le&
\int_{\mathbb R\setminus I^{\sharp}}
\frac{d\mu(t)}{|t-\zeta|^k}
+
\int_{\eta}\frac{|\rho(w)|\,|dw|}{|w-\zeta|^k}
\nonumber\\
&\le&
(\delta_0-\delta)^{-k}
\left(
1+\length(\eta)\sup_{w\in\eta}|\rho(w)|
\right).
\label{eq:sk-bound-proof}
\end{eqnarray}
This proves \eqref{eq:sk-bound}.
\end{proof}

\begin{rem}
With the convention
$$
s_1(\zeta)
=
\int_{\mathbb R}\frac{d\mu(t)}{t-\zeta},
\qquad
\Impart\zeta>0,
$$
the upper half-plane boundary value satisfies
$$
\frac{1}{\pi}\Impart s_1(\xi+i0)=\rho(\xi)
$$
at points where $\mu$ has density $\rho$. 
This is the sign convention used in the Stieltjes inversion formula below.
\end{rem}

\begin{rem}
The constant $C_\delta$ is independent of $k$. 
This point is important for summing the random walk series.
\end{rem}

\section{Holomorphic continuation and finite-order expansion of the averaged diagonal resolvent}

We now use Lemma~\ref{lem:single-site-continuation} to continue the
whole random walk series.

\begin{lem}
\label{lem:Mn-bound}
Fix $0<\delta<\delta_0$, and let $C_\delta$ be the constant from
Lemma~\ref{lem:single-site-continuation}. Put
$$
K_\delta
=
\max\{1,C_\delta\}(\delta_0-\delta)^{-1}.
$$
Then, for every $n\ge0$ and every $\zeta\in\Omega_\delta(I)$,
\begin{equation}
\label{eq:Mn-bound}
|M_n(\zeta)|
\le
|\Gamma_n(0,0)|K_\delta^{n+1}
\le
(q+1)^nK_\delta^{n+1}.
\end{equation}
\end{lem}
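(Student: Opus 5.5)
The plan is to bound each walk's contribution to $M_n$ separately, using the continued formula \eqref{eq:Mn-formula-statement}. By Lemma~\ref{lem:single-site-continuation}, every $s_k$ extends holomorphically to $\Omega_\delta(I)$. Since $\Gamma_n(0,0)$ is finite, the right-hand side of \eqref{eq:Mn-def} is then a finite sum of products of holomorphic functions on $\Omega_\delta(I)$. It agrees with $M_n$ on $\mathbb C_+\cap\Omega_\delta(I)$, so it provides the holomorphic continuation of $M_n$. We keep the same notation for it. It therefore suffices to bound each product appearing in that sum.

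Fix $\gamma\in\Gamma_n(0,0)$ and $\zeta\in\Omega_\delta(I)$. We apply \eqref{eq:sk-bound} to each factor $s_{\nu_\gamma(x)}(\zeta)$ with $\nu_\gamma(x)>0$. Writing $C'=\max\{1,C_\delta\}$ and $d=\delta_0-\delta$, this gives
$$
\Bigl|\prod_{x:\,\nu_\gamma(x)>0}s_{\nu_\gamma(x)}(\zeta)\Bigr|
\le C_\delta^{p(\gamma)}\,d^{-\sum_x\nu_\gamma(x)}
\le (C')^{p(\gamma)}\,d^{-(n+1)},
$$
where $p(\gamma)=\#\{x:\nu_\gamma(x)>0\}$ is the number of distinct visited vertices. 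Here we used $\sum_x\nu_\gamma(x)=n+1$ from Section~\ref{sec:randomwalkexpansion}.

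The step requiring care is the power of $C_\delta$. Each factor contributes its own constant $C_\delta$, so the exponent is $p(\gamma)$ and not $1$. This is exactly why $\max\{1,C_\delta\}$ appears in the definition of $K_\delta$. Since $1\le p(\gamma)\le n+1$ and $C'\ge1$, we get $(C')^{p(\gamma)}\le (C')^{n+1}$. Hence each walk contributes at most $(C')^{n+1}d^{-(n+1)}=K_\delta^{n+1}$.

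Summing over $\gamma$ and using $|(-1)^n|=1$ yields $|M_n(\zeta)|\le|\Gamma_n(0,0)|K_\delta^{n+1}$. Finally, $|\Gamma_n(0,0)|\le(q+1)^n$, because each of the $n$ steps chooses one of the $q+1$ neighbors of the current vertex; this count was already used in Section~\ref{sec:randomwalkexpansion}. Together these give \eqref{eq:Mn-bound}. For odd $n$ the bound holds trivially, since $M_n\equiv0$ by Lemma~\ref{lem:odd-vanish}.
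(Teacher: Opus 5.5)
Your proposal is correct and follows the paper's proof essentially step for step. You bound each walk's monomial by $C_\delta^{r_\gamma}(\delta_0-\delta)^{-(n+1)}\le K_\delta^{n+1}$ using $r_\gamma\le n+1$ and $\sum_x\nu_\gamma(x)=n+1$, then sum over $\Gamma_n(0,0)$ with $|\Gamma_n(0,0)|\le(q+1)^n$. Your explicit remark that the continued $M_n$ is the walk sum built from the continued $s_k$ makes precise a point the paper only states later, in the proof of Theorem~\ref{thm:main-resolvent}.
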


\begin{proof}
Let $\gamma\in\Gamma_n(0,0)$. By
Lemma~\ref{lem:single-site-continuation}, for every $x$ with
$\nu_\gamma(x)>0$ we have
$$
|s_{\nu_\gamma(x)}(\zeta)|
\le
C_\delta(\delta_0-\delta)^{-\nu_\gamma(x)}.
$$
Hence
\begin{eqnarray}
\left|
\prod_{x:\,\nu_\gamma(x)>0}s_{\nu_\gamma(x)}(\zeta)
\right|
&\le&
C_\delta^{r_\gamma}
(\delta_0-\delta)^{-\sum_x\nu_\gamma(x)}
\nonumber\\
&\le&
K_\delta^{n+1},
\label{eq:walk-monomial-bound}
\end{eqnarray}
where
$$
r_\gamma=\#\{x\in\mathbb T_q:\nu_\gamma(x)>0\}
$$
is the number of visited vertices.
Here, we used $r_\gamma\le n+1$ and $\sum_x\nu_\gamma(x)=n+1$.
Summing over $\gamma\in\Gamma_n(0,0)$ gives the first inequality in \eqref{eq:Mn-bound}. 
The second inequality follows from the elementary bound $|\Gamma_n(0,0)|\le(q+1)^n$.
\end{proof}

We can now prove the main resolvent theorem.

\begin{proof}
We prove Theorem~\ref{thm:main-resolvent}. 
Fix $0<\delta<\delta_0$.
Let
$$
Q_\delta=(q+1)K_\delta.
$$
Choose $\lambda_0$ so large that
$$
\lambda_0>2Q_\delta
\qquad
\text{and}
\qquad
\lambda_0>\frac{2(q+1)}{\delta}.
$$

For $\lambda\ge\lambda_0$, Lemma~\ref{lem:Mn-bound} implies that the
series
$$
\sum_{n=0}^{\infty}\lambda^{-n-1}M_n(\zeta)
$$
converges uniformly on $\Omega_\delta(I)$. Indeed,
\begin{eqnarray}
\sum_{n=0}^{\infty}\lambda^{-n-1}|M_n(\zeta)|
&\le&
\sum_{n=0}^{\infty}\lambda^{-n-1}(q+1)^nK_\delta^{n+1}
\nonumber\\
&=&
\frac{K_\delta}{\lambda}
\sum_{n=0}^{\infty}
\left(\frac{Q_\delta}{\lambda}\right)^n,
\label{eq:uniform-series-bound}
\end{eqnarray}
and the geometric series converges because $Q_\delta/\lambda\le1/2$.
Therefore
\begin{equation}
\label{eq:F-lambda-def}
F_\lambda(\zeta)
=
\sum_{n=0}^{\infty}\lambda^{-n-1}M_n(\zeta)
\end{equation}
defines a holomorphic function on $\Omega_\delta(I)$.

We next compare $F_\lambda$ with the original averaged diagonal resolvent.
If $\Impart \zeta>(q+1)/\lambda$, then
$\Impart (\lambda\zeta)>q+1$, and the Neumann series argument in
Section~\ref{sec:randomwalkexpansion} gives the identity \eqref{eq:scaled-series}. 
Hence, we have
$$
F_\lambda(\zeta)=m_\lambda(\lambda\zeta)
$$
for $\Impart \zeta>(q+1)/\lambda$. 
Since $\lambda>(q+1)/\delta$, the set
$$
\Omega_\delta(I)\cap\{\Impart \zeta>(q+1)/\lambda\}
$$
is a nonempty open subset of $\Omega_\delta(I)\cap\mathbb C_+$.
On $\Omega_\delta(I)\cap\mathbb C_+$, both $F_\lambda(\zeta)$ and
$m_\lambda(\lambda\zeta)$ are holomorphic. 
Since $\Omega_\delta(I)\cap\mathbb C_+$ is connected, the identity theorem shows that
$$
F_\lambda(\zeta)=m_\lambda(\lambda\zeta)
$$
for every $\zeta\in\Omega_\delta(I)\cap\mathbb C_+$. 
Thus, $F_\lambda$ is a holomorphic continuation of
$\zeta\mapsto m_\lambda(\lambda\zeta)$ from $\mathbb C_+$ to $\Omega_\delta(I)$.

We now derive the finite-order expansion. 
Let $N\ge0$ and write
\begin{equation}
\label{eq:Rtilde-def}
\widetilde R_{N,\lambda}(\zeta)
=
\sum_{n=N+1}^{\infty}\lambda^{-n-1}M_n(\zeta).
\end{equation}
Then, by Lemma~\ref{lem:Mn-bound},
\begin{eqnarray}
|\widetilde R_{N,\lambda}(\zeta)|
&\le&
\frac{K_\delta}{\lambda}
\sum_{n=N+1}^{\infty}
\left(\frac{Q_\delta}{\lambda}\right)^n
\nonumber\\
&\le&
\frac{2K_\delta}{\lambda}
\left(\frac{Q_\delta}{\lambda}\right)^{N+1}
\nonumber\\
&=&
2K_\delta Q_\delta^{N+1}\lambda^{-N-2}.
\label{eq:Rtilde-bound}
\end{eqnarray}
Hence
\begin{equation}
\label{eq:scaled-main-expansion}
m_\lambda(\lambda\zeta)=\sum_{n=0}^{N}\lambda^{-n-1}M_n(\zeta)+\widetilde R_{N,\lambda}(\zeta)
\end{equation}
for $\zeta\in\Omega_\delta(I)$, with a uniform remainder bound of order $\lambda^{-N-2}$. 
Replacing $\zeta$ by $z/\lambda$ gives \eqref{eq:main-resolvent-expansion} and
\eqref{eq:main-resolvent-remainder}. 
The formula \eqref{eq:Mn-formula-statement} is the same as \eqref{eq:Mn-def}, where
$s_k$ is now understood as the continued function from
Lemma~\ref{lem:single-site-continuation}. 
The remaining assertions, namely $M_0=s_1$ and the vanishing of the odd coefficients, follow from
Lemmas~\ref{lem:low-coefficients} and \ref{lem:odd-vanish}, respectively.
This completes the proof.
\end{proof}

\begin{rem}
The proof does not use a pointwise estimate of
$|\lambda\omega_x-z|^{-1}$ away from the support in the neighborhood of $\lambda I$. 
The only place where the stronger assumption is needed is
the continuation of the single-site transforms in Lemma~\ref{lem:single-site-continuation}.
Once this lemma is available,
the random walk series can be summed again because the Bethe lattice has
only exponentially many closed walks of a given length.
\end{rem}

\section{Density of states}

Recall that $N_\lambda$ denotes the root-averaged spectral measure, not a 
finite-volume eigenvalue counting limit:
$$
N_\lambda(B)
=
\E\langle \delta_0,E_{H_{\lambda,\omega}}(B)\delta_0\rangle
$$
for Borel sets $B\subset\mathbb R$. 
Equivalently, it is characterized by the Stieltjes transform relation
\begin{equation}
\label{eq:dos-stieltjes}
m_\lambda(z)=\int_{\mathbb R}\frac{dN_\lambda(E)}{E-z},
\qquad\Impart z>0,
\end{equation}
see, for example, \cite{KirschMetzger2007}. 
We use the convention \eqref{eq:dos-stieltjes} throughout this section.
With this convention,
if $dN_\lambda$ is absolutely continuous on an interval and the upper
half plane boundary value exists there, then its density is given by
$$
\frac{1}{\pi}\Impart m_\lambda(E+i0).
$$

\begin{proof}
We prove Theorem~\ref{thm:main-dos}.
By Theorem~\ref{thm:main-resolvent}, for every sufficiently large $\lambda$ the function
$\zeta\mapsto m_\lambda(\lambda\zeta)$ is holomorphic on $\Omega_\delta(I)$.
Hence, the function
$$
E\mapsto m_\lambda(E)
$$
is holomorphic on the neighborhood $\lambda\Omega_\delta(I)$ of $\lambda I$.
Thus, its upper half-plane boundary value on $\lambda I$
exists and is equal to the restriction of this holomorphic continuation.

By the Stieltjes inversion formula applied to \eqref{eq:dos-stieltjes}, 
with the sign convention used there, the
density corresponding to the upper half-plane boundary value is
$(1/\pi)\Impart m_\lambda(E+i0)$. 
By the Stieltjes inversion formula applied to
\eqref{eq:dos-stieltjes}, with the sign convention used there, for every
$\varphi\in C_0^\infty(\lambda I)$ one has
$$
\int_{\lambda I}\varphi(E)\,dN_\lambda(E)
=
\lim_{\varepsilon\downarrow0}\frac{1}{\pi}\int_{\lambda I}
\varphi(E)\Impart m_\lambda(E+i\varepsilon)\,dE .
$$
Since the upper half-plane branch of $m_\lambda$ extends holomorphically
across $\lambda I$, this convergence is locally uniform on $\lambda I$.
Hence, we have
$$
\int_{\lambda I}\varphi(E)\,dN_\lambda(E)
=
\int_{\lambda I}
\varphi(E)\frac{1}{\pi}\Impart m_\lambda(E)dE.
$$
Therefore, $dN_\lambda$ is absolutely continuous on $\lambda I$, 
and its density is given by
\begin{equation}
\label{eq:density-boundary}
n_\lambda(E)=\frac{1}{\pi}\Impart m_\lambda(E),\qquad E\in\lambda I,
\end{equation}
where $m_\lambda(E)$ denotes the value at $E\in\lambda I$ of the holomorphic continuation. 
Since $m_\lambda(E)$ is real analytic on $\lambda I$, the same is true for $n_\lambda(E)$.

Now let $\xi\in I$. Substituting $z=\lambda\xi$ into \eqref{eq:main-resolvent-expansion} gives
\begin{equation}
\label{eq:boundary-at-xi}
m_\lambda(\lambda\xi)=\sum_{n=0}^{N}\lambda^{-n-1}M_n(\xi)
+R_{N,\lambda}(\lambda\xi).
\end{equation}
Here $M_n(\xi)$ denotes the value at $\xi\in I$ of the holomorphic
continuation obtained in Theorem~\ref{thm:main-resolvent}. 
Taking the imaginary part and dividing by $\pi$, we obtain
\eqref{eq:main-dos-expansion} with $a_n$ defined by \eqref{eq:an-def} and with
$$
r_{N,\lambda}(\xi)=\frac{1}{\pi}\Impart R_{N,\lambda}(\lambda\xi).
$$
The bound \eqref{eq:rn-bound} follows from \eqref{eq:main-resolvent-remainder}.

It remains to identify $a_0$. 
Since $M_0=s_1$, we have
$$
a_0(\xi)=\frac{1}{\pi}\Impart s_1(\xi),
$$
where $s_1(\xi)$ denotes the value at $\xi\in I$ of the holomorphic
continuation of the single-site Stieltjes transform. 
Since $s_1$ is the Stieltjes transform of $\mu$ and $\mu$ has density $\rho$ on $I$, 
the Stieltjes inversion formula gives
$$
\frac{1}{\pi}\Impart s_1(\xi)=\rho(\xi)
$$
for $\xi\in I$. 
This proves \eqref{eq:a0-rho}. 
The vanishing of odd $a_n$ follows from the vanishing of odd $M_n$ in Theorem~\ref{thm:main-resolvent}.
This completes the proof.
\end{proof}

\begin{rem}
The leading term of the root-averaged density of states on the scale
$E=\lambda\xi$ is simply the density of the single-site law.
This agrees with the heuristic picture that, for strong-disorder, the diagonal random
potential is dominant and the hopping term gives only lower order corrections.
\end{rem}

\section{The uniform distribution}
\label{sec:uniform-distribution}

\subsection{The first correction term}

We now apply the general result to the simplest example, namely the uniform distribution on an interval. 
Let
$$
d\mu(t)=\frac{1}{2a}\chi_{[-a,a]}(t)\,dt,
\qquad
a>0.
$$
Let
$$
I=(b_1,b_2)
$$
be an open interval such that
$$
-a<b_1<b_2<a.
$$
Then
$$
I\subset (-a,a)=\operatorname{int}(\operatorname{supp}\mu).
$$
In particular, the scaled interval $\lambda I$ lies inside the almost sure spectrum:
$$
\lambda I\subset
\lambda\operatorname{supp}\mu
\subset\Sigma_\lambda = [-2\sqrt q,2\sqrt q]+\lambda[-a,a].
$$

Choose $\delta_0>0$ so that
$$
I^\sharp=(b_1-\delta_0,b_2+\delta_0)\subset (-a,a).
$$
On $I^\sharp$, the density of $\mu$ is the constant function
$$
\rho(t)=\frac{1}{2a}.
$$
It extends holomorphically to the whole complex plane. 
Therefore, the local analyticity assumption used in Theorems~\ref{thm:main-resolvent}
and \ref{thm:main-dos} is satisfied on $I$.

For this distribution, the single-site Stieltjes transforms can be computed explicitly.
For $\Impart\zeta>0$,
$$
s_1(\zeta)=
\frac{1}{2a}\int_{-a}^{a}\frac{dt}{t-\zeta}
=
\frac{1}{2a}
\log\frac{a-\zeta}{-a-\zeta},
$$
where the branch of the logarithm is chosen so that $s_1$ is holomorphic
on the upper half-plane. 
For $k\ge2$, differentiation gives
$$
s_k(\zeta)
=
\frac{1}{(k-1)!}
\frac{d^{k-1}}{d\zeta^{k-1}}s_1(\zeta)
=
\frac{1}{2a(k-1)}
\left\{
\frac{1}{(-a-\zeta)^{k-1}}
-
\frac{1}{(a-\zeta)^{k-1}}
\right\}.
$$
In particular,
$$
s_2(\zeta)
=
-\frac{1}{a^2-\zeta^2}.
$$

Let $\xi\in I$. Since $\xi\in(-a,a)$, the upper half-plane boundary
value of $s_1$ is
$$
s_1(\xi+i0)
=
\frac{1}{2a}\log\frac{a-\xi}{a+\xi}
+
\frac{\pi i}{2a}.
$$
Thus
$$
a_0(\xi)
=
\frac{1}{\pi}\Impart s_1(\xi+i0)
=
\frac{1}{2a}.
$$
This is consistent with the general identity $a_0(\xi)=\rho(\xi)$.

We next compute the first nonzero correction term. 
By Lemma~\ref{lem:low-coefficients},
$$
M_2(\zeta)
=
(q+1)s_2(\zeta)s_1(\zeta).
$$
Since $s_2(\xi)$ is real for $\xi\in I$, we obtain
$$
a_2(\xi)
=
\frac{1}{\pi}\Impart M_2(\xi+i0)
=
(q+1)s_2(\xi)\frac{1}{2a}.
$$
Using the formula for $s_2$, this becomes
$$
a_2(\xi)
=
-\frac{q+1}{2a(a^2-\xi^2)}.
$$
Moreover, all odd coefficients vanish:
$$
a_{2m+1}(\xi)=0,
\qquad
m=0,1,2,\ldots .
$$
Applying Theorem~\ref{thm:main-dos} with $N=3$, we therefore obtain,
for every fixed closed subinterval of $I$,
$$
n_\lambda(\lambda\xi)=
\frac{1}{2a}\lambda^{-1}-\frac{q+1}{2a(a^2-\xi^2)}\lambda^{-3}+O(\lambda^{-5})
$$
as $\lambda\to\infty$, uniformly for $\xi$ in that closed subinterval.
This gives an explicit two term strong-disorder expansion of the density
of states in the uniform case.

The formula is not uniform up to the endpoints $\xi=\pm a$.
Indeed, the coefficient
$$
a_2(\xi)
=
-\frac{q+1}{2a(a^2-\xi^2)}
$$
has singularities at $\xi=\pm a$. This is consistent with
Theorem~\ref{thm:main-dos}, which applies on intervals
$I\Subset(-a,a)$, and not up to the boundary of the support.

%%%%%%%%%%%%%%%

\bibliographystyle{plain}

\section*{Statements and Declarations}

\subsection*{Competing Interests}
The author declares that he has no competing interests.

\subsection*{Data availability}
No datasets were generated or analyzed during the current study.

\subsection*{Funding}
The author received no specific funding for this work.

\subsection*{Author Contributions}
The author contributed to all parts of the manuscript.

%\subsection*{Acknowledgements}
%The author thanks the anonymous referees for helpful comments.

\end{document}